\newtheorem{theorem}{Theorem}
\newtheorem{lemma}[theorem]{Lemma}
\theoremstyle{definition}
\theoremstyle{remark}
\newtheorem*{remark}{Remark}
  \newcommand{\R}{\mathbb{R}}
\begin{document}

\title{On the Complexity of Polytopes in $LI(2)$}

\author{Komei Fukuda \thanks{Department of Mathematics and Department of Computer Science. Institute of Theoretical Computer Science,
 ETH Z\"{u}rich.  CH-8092 Z\"{u}rich, Switzerland.
  \texttt{komei.fukuda@math.ethz.ch}}
  \and 
  May Szedl{\'a}k\thanks{Department of Computer Science. Institute of Theoretical Computer Science, ETH Z\"{u}rich.
  CH-8092 Z\"{u}rich, Switzerland.
    \texttt{may.szedlak@inf.ethz.ch} \newline Research supported by the Swiss National Science Foundation (SNF Project 200021\_150055 / 1)}
  }

%\author{Komei Fukuda \thanks{Department of Computer Science, Institute of Theoretical Computer Science, ETH Z\"{u}rich, CH-8092 Z\"{u}rich, Switzerland, 
%  \texttt{\{fukuda, may.szedlak\}@inf.ethz.ch}.  
%  \newline Research supported by the Swiss National Science Foundation
%  (SNF Project 200021\_150055 / 1)} \and May Szedl\'ak $^*$
%}
\maketitle
\begin{abstract}
In this paper we consider polytopes given by systems of $n$ inequalities in $d$ variables, where every inequality has at most two variables with nonzero coefficient. We denote this family by $LI(2)$. We show that despite of the easy algebraic structure, polytopes in $LI(2)$ can have high complexity. We construct a polytope in $LI(2)$, whose number of vertices is almost the number of vertices of the dual cyclic polytope, the difference is a multiplicative factor of depending on $d$ and in particular independent of $n$. Moreover we show that the dual cyclic polytope can not be realized in $LI(2)$.
\end{abstract}

\section{Introduction}
%In this paper analyze the complexity of polyhedra, given by two variables per inequality. In particular we want to know how many vertices a polyhedra of this family can have. Since we want to analyze the maximum complexity, for simplicity in this chapter we only consider bounded polyhedra, the so-called polytopes. This is possible since every unbounded polyhedron with at least one vertex can be converted into a polytope by adding one more constraint. This can only increase the number of vertices and addition of one constraint does not change our complexity results. 

Throughout, we assume that we are given a bounded polytope by a system of $n$ inequalities in $d$ variables, of form $P = \{x \in \R^d \mid Ax \leq b\}$, where $A \in \R^{n \times d}$ and $b \in \R^n$. 
In the feasibility problem we want to find a solution $x \in P$.
In general no strongly polynomial time algorithm (polynomial in $d$ and $n$) to solve the feasibility problem is known. Although the simplex algorithm runs fast in practice, in general it can have exponential running time \cite{dantzig-simplex, klee-minty-simplex}. On the other hand the ellipsoid method runs in polynomial time on the encoding of the input size, but is not practical \cite{Khachiyan}. A first practical polynomial time algorithm, the interior-point method, was introduced in \cite{Karmarkar}, and has been modified in many ways since \cite{wright97}.

We denote by $LI(2)$ the family of systems $Ax \leq b$, that have at most two variables per inequality with nonzero coefficient. 
In this family, Hochbaum and Naor's algorithm finds a feasible point or a certificate for infeasibility in time $O(d^2n \log n)$ \cite{HN}, i.e., it solves the feasibility problem in strongly polynomial time. Using this result and Clarkson's redundancy removal algorithm \cite{c-mosga-94}, it was shown that in $LI(2)$ all redundancies can be detected in strongly polynomial time $O(n d^2 s \log s)$, where $s$ denotes the number of nonredundant constraints \cite{twovariables}. %This result is an improvement of Megiddo's algorithm with running time $O(d^3 n \log n)$ \cite{Megiddo}. Although their techniques are similar and both rely heavily on \cite{Aspvall}, the improved version is much simpler.
Because of this difference in running time, it is hence natural to ask, whether polytopes $LI(2)$ have a simpler structure than general polytopes. In particular we are interested to know how many vertices a polytope of this family can have. %Since we want to analyze the maximum complexity, for simplicity throughout we only consider bounded polyhedra, the so-called polytopes. This is possible since every unbounded polyhedron with at least one vertex can be converted into a polytope by adding one more constraint. This can be done in such a way that the number of vertices does not decrease and addition of one constraint does not change our complexity results. 

It is known that in general the dual cyclic polytope maximizes the number of vertices for a polytope given by $n$ constraints (see Theorem \ref{thm_MM}).
In this paper we construct a polytope in $LI(2)$, that has almost the same complexity as the dual cyclic polytope. This polytope was already introduced in \cite{Amenta1999} in the context of deformed products. 
In this polytope the number of vertices is smaller by a factor that only depends on the dimension $d$ and not on $n$, (see Lemma \ref{lemma_vert}). A similar result can be shown not only for vertices but for all $k$-faces (see Theorem \ref{thm_mainLI2}). 
This shows that polytopes in $LI(2)$ can have high complexity; if $d$ is constant, then even the same complexity as the dual cyclic polytope.   

We will also show in Theorem \ref{thm_slack} that the dual cyclic polytope can not be realized in LI(2) for $d \geq 4$. In particular in the dual cyclic polytope any pair of the $n$ facets are adjacent, however in $LI(2)$, there are $\Omega({n^2}/d^2)$ pairs that are not adjacent.

\section{Definitions and Known Results}
Let $P = \{x \in \R^d \mid Ax \leq b\}$ be a convex polytope in $\R^d$, where $A \in \R^{n \times d}$ and $b \in \R^n$. The rows of $Ax \leq b$ are called the \emph{constraints}. 
The \emph{dimension} of $P$, denoted $\dim(P)$, is defined as the number of affinely independent points in $P$ minus one. 
A $k$-dimensional subset $F \subseteq P$ is a $k$-face of $P$, if $F$ has dimension $k$ and if there exists a hyperplane $h: ax \leq b$, such that $ax^* = b$ for all $x^* \in F$ and $ax^* < b$ for all $x^* \in P \setminus F$. The $0$-dimensional faces are called the \emph{vertices} of $P$, the $(d-1)$-dimensional faces are called \emph{facets}. If $F$ is a $k$-face, then $ax = b$ for at least $d-k$ constraints of $Ax \leq b$.

For $0 \leq k \leq d$ we denote by $f_k := f_k(P)$ the number of $k$-dimensional faces of $P$. The \emph{$f$-vector} of $P$ is defined by
$f(P) := (f_0, f_1, \dots, f_d).$

\begin{theorem} [McMullen's Upper Bound Theorem \cite{McMullen, Fukuda2016}] 
\label{thm_MM}
The maximum number of $k$-faces in a $d$-dimensional polytope with $n$ constraints is attained by the dual cyclic polytope $c^*(n,d)$ and is given by
\begin{align*}
f_k(c^*(n,d)) = \sum_{r = \min \{k, \lceil d/2 \rceil\}}^{\lceil d/2 \rceil -1} \binom{n-d-1 + r}{r}\binom{r}{k} + \sum_{r = \max \{k, \lceil d/2 \rceil\}}^{d} \binom{n-r-1}{d- r}\binom{r}{k}.
\end{align*}
\noindent In particular the number of vertices is given by
\[f_0 (c^*(n,d)) = \binom{n- \lceil d/2 \rceil}{n-d} + \binom{n- \lfloor d/2 \rfloor -1}{n-d}.\]
\end{theorem}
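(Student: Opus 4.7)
The plan is to follow McMullen's classical strategy via the $h$-vector of a simplicial polytope. First I would pass to the polar dual: if $P$ is a $d$-polytope with $n$ facets, then $P^*$ is a $d$-polytope with $n$ vertices and $f_k(P) = f_{d-1-k}(P^*)$, so it suffices to maximize $f_{d-1-k}$ over all $d$-polytopes with $n$ vertices. A generic perturbation of the vertices turns $P^*$ into a simplicial polytope $Q$ without decreasing any $f_j$, so one may assume $Q$ is simplicial with $n$ vertices.

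Next I would introduce the $h$-vector $(h_0, \dots, h_d)$ of $Q$, defined by the invertible linear change of variables
\[
\sum_{i=0}^{d} f_{i-1}(Q)\,(t-1)^{d-i} \;=\; \sum_{i=0}^{d} h_i\,t^{d-i},
\]
with $f_{-1}:=1$. Inverting this relation and translating back to $P$ yields
\[
f_k(P) \;=\; f_{d-1-k}(Q) \;=\; \sum_{r=k}^{d} \binom{r}{k}\,h_{d-r},
\]
so bounding $f_k(P)$ reduces to bounding the $h$-numbers. The Dehn--Sommerville relations $h_i = h_{d-i}$, which follow from a line-shelling of $\partial Q$ or from Euler-type relations on the face lattice, let one restrict attention to $i \leq \lfloor d/2 \rfloor$.

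The central inequality to establish is
\[
h_i \;\leq\; \binom{n-d-1+i}{i} \qquad \text{for } 0 \leq i \leq \lfloor d/2 \rfloor.
\]
This is the main obstacle and the non-trivial heart of the theorem. I would prove it by choosing a generic linear functional on $Q$ to induce a line-shelling of $\partial Q$; under this shelling, $h_i$ counts the facets whose ``restriction set'' of already-shelled neighbours has size exactly $i$. For $i \leq \lfloor d/2 \rfloor$, each such restriction is determined by an $i$-subset of the $n$ vertices lying ``below'' the facet, and a direct count yields the stated binomial bound. Alternatively, Stanley's argument via the Cohen--Macaulay property of the Stanley--Reisner ring of $\partial Q$ gives the same inequality.

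Finally, I would verify that the cyclic polytope $c(n,d)$ attains equality by computing its $h$-vector directly from Gale's evenness condition. Substituting $h_r = \binom{n-d-1+r}{r}$ for $r < \lceil d/2 \rceil$ and $h_r = \binom{n-r-1}{d-r}$ for $r \geq \lceil d/2 \rceil$ (the latter obtained via Dehn--Sommerville) into the displayed formula for $f_k(P)$ and splitting the sum at $\lceil d/2 \rceil$ produces the two-term expression in the statement; the special case $k=0$ reduces to the claimed formula for $f_0(c^*(n,d))$.
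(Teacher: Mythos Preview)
The paper does not actually prove this theorem; it is quoted as a known result with citations to McMullen and to Fukuda's lecture notes. Later, in Section~4, the paper does set up part of the relevant machinery---but on the \emph{simple} side rather than the simplicial side you use: it defines $h_i$ as the number of vertices of indegree $i$ under a generic linear orientation of a simple polytope, records the identity $f_k(P)=\sum_{r\ge k}\binom{r}{k}h_r(P)$, and then simply \emph{states} the Strengthened Upper Bound Theorem $h_i(P)\le h_i(c^*(n,d))$ with a citation, remarking that McMullen's theorem follows from it.

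Your outline is the classical simplicial-side version of exactly this argument (pass to the polar, perturb to simplicial, introduce the $h$-vector, invoke Dehn--Sommerville, bound $h_i$ by $\binom{n-d-1+i}{i}$, and check equality for the cyclic polytope). It is correct as a plan and is essentially the polar-dual twin of the framework the paper invokes; the two differ only in whether one counts in-degrees of vertices in a simple polytope or restriction sets of facets in a shelling of a simplicial sphere. Neither you nor the paper actually fills in the hard step---the inequality $h_i\le\binom{n-d-1+i}{i}$---which you rightly flag as the non-trivial heart; your one-line sketch of the shelling count (``each restriction is determined by an $i$-subset of the vertices lying below the facet'') is suggestive but would need substantially more work to be a proof.
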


\begin{remark}
For $\lceil d/2 \rceil \leq k \leq d$ the formula can be simplified to
\[f_k(c^*(n,d)) = \binom{n}{d-k}.\]
This means that any $(d-k)$ constraints define a $k$-face. 
\end{remark}

For our calculation we will make use of the following well known formulas.
Stirling's formula says that 
\[n!  = \Theta\left(\sqrt{n} \frac{n^n}{e^n}\right),\]
as $n$ goes to infinity.
It follows that 
\begin{equation}
\label{eq_binom}
%\Omega(1) \cdot\frac{n^{n-1}}{k^k(n-k)^{n-k}} \leq 
\binom{n}{k} \leq O(1) \cdot \frac{n^n}{k^k(n-k)^{n-k}}.
%\binom{n}{k} = \text{poly}(k) \cdot \frac{n^n}{k^k(n-k)^{n-k}}.
\end{equation}
Furthermore we need the well known inequality
\begin{equation}
\label{eq_e}
1+x \leq e^x, \text{ for all } x \in R.
\end{equation}
%which for $x = o(1)$ can be written as 
%\begin{equation}
%\label{eq_e2}
%(1+x) = \Theta(e^x).
%\end{equation}
We conclude that %for $k = o(n)$
\begin{equation}
\label{eq_2}
\binom{n}{k} \leq O(1) \cdot \left(\frac{n}{k}\right)^k \cdot \left(1 + \frac{k}{n-k}\right)^{n-k} \leq O(1) \cdot \left(\frac{n}{k}\right)^k \cdot e^k.
%\binom{n}{k} = \text{poly}(k) \cdot \left(\frac{n}{k}\right)^k \cdot \left(1 + \frac{k}{n-k}\right)^{n-k} \leq \text{poly}(k) \cdot \left(\frac{n}{k}\right)^k \cdot e^k.
\end{equation}
%Using the inequality (\ref{eq_e2}) it also follows that 
%\begin{equation}
%\label{eq_3}
%(n-k_1)^{k_2} \leq e^\frac{-k_1\cdot k_2}{n}n^{k_2}.
%\end{equation}

\section{Lower Bound on Maximum Complexity of $LI(2)$}
\sectionmark{Lower Bound on Maximum Complexity}
In the following we always assume that $\lfloor d/2 \rfloor$ is a divisor of $n$ (if $d$ is even) or $n-1$ (if $d$ is odd). All results naturally extend to any $d<n$, but we would like to avoid to have even more floors and ceilings in the notation. 
We want to construct a polytope in $LI(2)$, that has high complexity, i.e., with an $f$-vector of order close to the $f$-vector of the dual cyclic polytope. 

In a first part let us assume that $d$ is even. We pair the set of variables and define an $n/(d/2)$ polygon on each of the pairs. Formally, for $1 \leq i \leq d/2$, let 
\[A_i \begin{pmatrix}
 x_{2i-1}\\
 x_{2i}
\end{pmatrix} \leq b_i,\]
be a polygon in the $(x_{2i-1}, x_{2i})$-plane, given by $n/(d/2)$ constraints with $n/(d/2)$ vertices. %(see Figure \ref{fig_ex1}). 
We denote $P_i^*:=P_i^*(n,d) = \{x \in \R^2 \mid A_i (x_{2i-1}, x_{2i})^T \leq b_i\}$ and by $G_i$ the set of constraints of $P_i^*$.

%\begin{figure*}[h]
%\begin{center}
%\includegraphics{ch_complexity/ex1.pdf}
%\end{center}
%\caption{Example of $P^*(12,3)$} \label{fig_ex1}
%\end{figure*}
Now $P^*(n,d)$ is defined as the $d$-dimensional polytope that we obtain from the union of $G_i$, $1 \leq i \leq d/2$. Since the $P_i^*$'s do not share any variables,
\[P^*(n,d) = \{x \in \R^d \mid (x_{2i-1}, x_{2i}) \in P_i, \text{ for all } 1 \leq i \leq d/2\}.\]

For $d$ odd, we pair the first $d-1$ variables and use the construction as above. Moreover we add the constraint $x_d \geq 0$, i.e.,
\[P^*(n,d) = \{x \in \R^d \mid (x_1, \dots, x_{d-1}) \in P^*(n-1, d-1) \wedge x_d \geq 0\}.\]

\begin{theorem} \cite{Amenta1999}
\label{thm_vert}
For $d$ even, the polytope $P^*(n,d)$ in $LI(2)$ has the following number of vertices:
\[\left(\frac{n}{ d/2 }\right)^{ d/2 }.\]
For $d$ odd it is
\[\left(\frac{n-1}{\lfloor d/2 \rfloor}\right)^{\lfloor d/2 \rfloor}.\]
%which is of the same order.
\end{theorem}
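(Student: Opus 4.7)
The plan is to recognise $P^*(n,d)$ as a Cartesian product in the natural coordinate splitting of $\R^d$ and then invoke the standard multiplicative behaviour of vertex counts under products. For even $d$, the polygons $P_i^*$ live in pairwise disjoint pairs of variables, so the system defining $P^*(n,d)$ is exactly the union of the systems $A_i(x_{2i-1},x_{2i})^T\le b_i$, which means
\[
P^*(n,d) \;=\; P_1^* \times P_2^* \times \cdots \times P_{d/2}^*
\]
under the identification $\R^d = \R^2\times\cdots\times\R^2$. I would then recall (or briefly verify) the general fact that the vertex set of a Cartesian product of polyhedra is the Cartesian product of vertex sets: a point $v=(v_1,\dots,v_k)$ in $Q_1\times\cdots\times Q_k$ is a vertex iff it is the unique maximiser of some linear functional $c$, and since $c$ decomposes as $(c_1,\dots,c_k)$ and maximisation separates coordinate-wise, this happens exactly when each $v_i$ is the unique maximiser of $c_i$ on $Q_i$. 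Equivalently, faces of a product are products of faces with dimensions adding up, so a $0$-face is a product of $0$-faces.

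For the even case, each $P_i^*$ was chosen as a polygon with $n/(d/2)$ vertices, so the product rule gives immediately
\[
f_0(P^*(n,d)) \;=\; \prod_{i=1}^{d/2}\frac{n}{d/2} \;=\; \Bigl(\frac{n}{d/2}\Bigr)^{d/2}.
\]
For odd $d$, the construction yields $P^*(n,d) = P^*(n-1,d-1)\times H$, where $H=\{x_d\in\R\mid x_d\ge 0\}$ is the nonnegative ray. The ray $H$ has a unique $0$-face, namely $\{0\}$, because the single defining inequality $x_d\ge 0$ is tight only at the origin and no other $0$-dimensional face can arise. Applying the product rule once more, each vertex of $P^*(n,d)$ is a vertex of $P^*(n-1,d-1)$ paired with $x_d=0$, so
\[
f_0(P^*(n,d)) \;=\; f_0(P^*(n-1,d-1)).
\]
Since $d-1$ is even, the case just proved (with $n-1$ in place of $n$ and $d-1$ in place of $d$) gives $((n-1)/\lfloor d/2\rfloor)^{\lfloor d/2\rfloor}$, matching the claim.

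There is no real obstacle here; the argument is a routine product decomposition. The only mildly subtle point is the odd case: one must be comfortable with the product rule for vertices when one factor ($H$) is unbounded, but this follows from the same linear-functional argument as in the bounded case (equivalently, faces of a product are products of faces, regardless of boundedness). A minor issue worth flagging is that the odd-$d$ construction produces an unbounded polyhedron rather than a polytope in the strict sense, but this does not affect the vertex count.
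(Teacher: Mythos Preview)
Your argument is correct. Both you and the paper start from the same observation---that the constraint blocks $G_i$ use disjoint coordinate pairs, so $P^*(n,d)$ is literally the Cartesian product $P_1^*\times\cdots\times P_{d/2}^*$ (times a ray when $d$ is odd)---but you then diverge in how you extract the vertex count. You invoke the general fact that $0$-faces of a product of polyhedra are exactly products of $0$-faces, justified via separable linear functionals, and read off the count immediately. The paper instead re-derives this in situ: it fixes a candidate vertex (one pair of consecutive constraints from each $G_i$), explicitly builds a supporting hyperplane as the sum of those $d$ constraints, checks by hand that it touches $P^*$ only at that point, and then argues separately that no other choice of $d$ constraints can yield a vertex. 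Your route is shorter and more conceptual; the paper's is more self-contained and avoids appealing to an external lemma, which also makes it easier to adapt to the face-counting in the subsequent Theorem~\ref{thm_compl}. Your remark that the odd-$d$ construction is actually unbounded is a fair observation the paper glosses over; as you note, it does not affect the vertex count, and the product-of-faces statement holds for general polyhedra.
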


The proof of \cite{Amenta1999} is given in a much more general setting of deformed products, we will here give the proof for our special case.
\begin{proof}
Let us assume first that $d$ is even. 
For $1 \leq i \leq d/2$ let $G_i =\{ g_i^1, \dots , g_i^{n/(d/2)}\}$, where the $g_i^j: a_i^j (x_{2i-1}, x_{2i})^T \leq b_i^j$, ordered in such a way that $g_i^j$ and $g_i^{(j+1)}$, $1 \leq j \leq d/2$, define a vertex of $P_i^*$. Throughout the proof, $j+1$ is always considered modulo $n/(d/2)$. We will show that if for every $P_i^*$ we choose two consecutive constraints $g_i^j$ and $g_i^{j+1}$, 
 these $d$ constraints define a vertex of $P^*(n,d)$ and those are the only sets of $d$ constraints that define vertices (see also Figure \ref{fig_ex2}). Let us denote the set of vertices of $P^*$ by $V(P^*)$. Formally we show that
\begin{align*}
V(P^*) = \{x \in R^d \mid \exists (j_1, \dots , j_{d/2}):  a_i^{j_i} (x_{2i-1}, x_{2i})^T = b_i^{j_i} \wedge a_i^{j_i + 1} (x_{2i-1}, x_{2i})^T = b_i^{j_i+1} \forall i\} .
\end{align*}
\begin{figure*}[h]
\begin{center}
\includegraphics{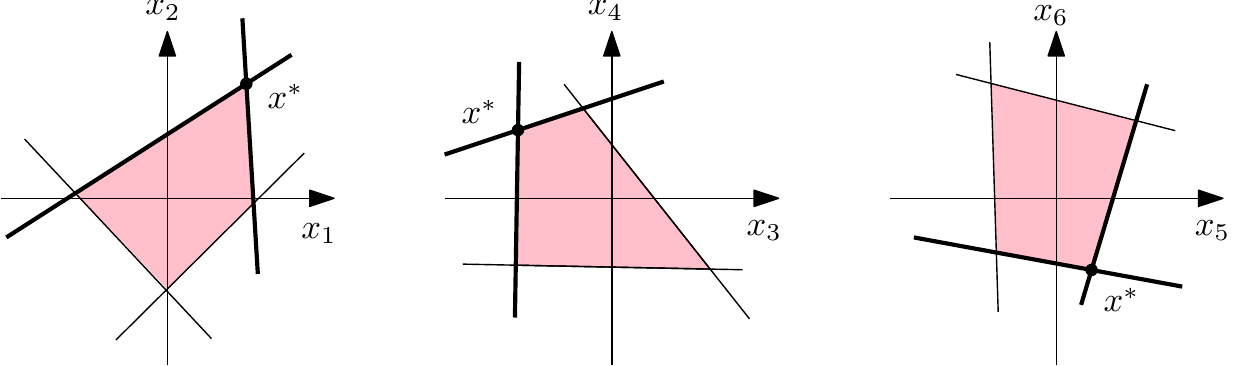}
\end{center}
\caption{$d$ constraints that define vertex in $P^*(12, 6)$} \label{fig_ex2}
\end{figure*}
Let us first show that the set on the right hand side is a subset of $V(P^*)$. We show that the $g_i^1, g_i^2$, $1 \leq i \leq d/2$, define a vertex, the rest follows from symmetry. Let us denote those $d$ constraints by $G'$ and $x^*$ the intersection point of their boundaries. It follows that $x^* \in P^*$ because $(x^*_{2i-1}, x^*_{2i}) \in P_i^*$ for all $i$.
We define the halfspace $h$ by
\[h: \sum_{i=1}^{d/2} (a_i^1 (x_{2i-1}, x_{2i})^T + a_i^2 (x_{2i-1}, x_{2i})^T) \leq \sum_{i=1}^{d/2}(b_i^1 + b_i^2),\]
the halfspace obtained by the sum of all constraints in $G'$. Let us denote this halfspace by $h: a'x \leq b'$. Then by definition if follows that $a'x^* = b'$. Now let $y \in P^* \setminus x^*$. Since $y \in P^*$ it follows that  $a_i^1 (x_{2i-1}, x_{2i})^T \leq  b_i^1$ and $a_i^2 (x_{2i-1}, x_{2i})^T \leq  b_i^2$ for all $i$. Moreover since $y \neq x^*$ there exists some $k$ such that $a_k^1 (x_{2i-1}, x_{2k})^T <  b_k^1$ or $a_k^2 (x_{2k-1}, x_{2k})^T <  b_k^2$. It follows that $a'x < b'$, hence by definition of a $0$-face, $x^*$ is a vertex.

For the other direction we need to show that no other $d$ constraints define a vertex (see also Figure \ref{fig_ex3}). If we choose more than two constraints from some $G_i$, then the intersection of their boundaries is empty. If we choose two constraints in $G_i$ that are not adjacent, the point it defines in $P^*_i$ violates some constraints of $G_i$. Hence, we need to choose two consecutive constraints.
\begin{figure*}[h]
\begin{center}
\includegraphics{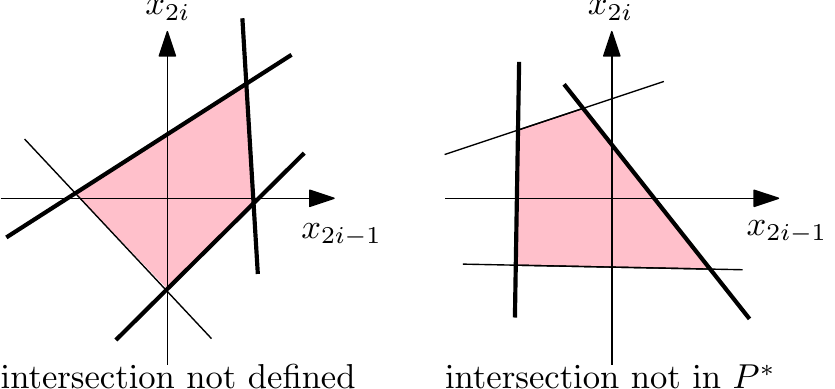}
\end{center}
\caption{$d$ constraints that do not define vertex} \label{fig_ex3}
\end{figure*}
The case where $d$ is odd is similar. The vertices of $P^*(n,d)$ are given by the constraints defining the vertices of $P^*(n-1,d-1)$ together with $x_d \geq 0$. $P^*(n-1,d-1)$ is the $d-1$ dimensional polytope defined by the constraints $G \setminus \{x_d \geq 0\}$.
The proof now follows by simple counting.
\end{proof}

%From now on assume that $d = o(n)$. 
We will compare the number of vertices between $P^*(n,d)$ and the dual cyclic polytope. Since we do not compare the exact values, but only the leading terms, we will not exactly compute the polynomial terms in $d$, but denote them by $\text{poly(d)}$. 
\begin{lemma}
\label{lemma_vert}
The dual cyclic polytope has a factor $O(e^{\lfloor d/2 \rfloor})$ more vertices than $P^*(n,d)$, i.e.,
$f_0(c^*(n,d)) \leq O( e^{\lfloor d/2 \rfloor}) \cdot f_0 (P^*(n,d)).$
%for some suitable constant $c$.
\end{lemma}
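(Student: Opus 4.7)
The plan is to directly bound $f_0(c^*(n,d))$ from above by $O(e^{\lfloor d/2 \rfloor})$ times the expression for $f_0(P^*(n,d))$ in Theorem \ref{thm_vert}, using the binomial estimate (\ref{eq_2}).

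First, I would rewrite the two binomials in the formula for $f_0(c^*(n,d))$ so that the lower index is (essentially) $\lfloor d/2 \rfloor$ rather than $n-d$. Using $\binom{a}{b}=\binom{a}{a-b}$ we get
\[
\binom{n-\lceil d/2\rceil}{n-d}=\binom{n-\lceil d/2\rceil}{\lfloor d/2\rfloor},\qquad
\binom{n-\lfloor d/2\rfloor-1}{n-d}=\binom{n-\lfloor d/2\rfloor-1}{\lceil d/2\rceil-1}.
\]
This is the natural form because, in the regime we care about, $n$ is much larger than $d$, so the "small" parameter in both binomials is of order $\lfloor d/2\rfloor$.

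Next, I would apply the estimate (\ref{eq_2}) to each term. For the first term, setting $k=\lfloor d/2\rfloor$ yields
\[
\binom{n-\lceil d/2\rceil}{\lfloor d/2\rfloor}\le O(1)\cdot\Bigl(\tfrac{n-\lceil d/2\rceil}{\lfloor d/2\rfloor}\Bigr)^{\lfloor d/2\rfloor}e^{\lfloor d/2\rfloor}\le O(1)\cdot\Bigl(\tfrac{n}{\lfloor d/2\rfloor}\Bigr)^{\lfloor d/2\rfloor}e^{\lfloor d/2\rfloor}.
\]
For the second term I would split by parity. If $d$ is even, then $\lceil d/2\rceil-1=d/2-1$, and (\ref{eq_2}) (together with $(n/(d/2-1))^{d/2-1}\le \mathrm{poly}(d)\cdot(n/(d/2))^{d/2}/n$, which gives a smaller expression for large $n$) bounds this term by the same quantity. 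If $d$ is odd, then $\lceil d/2\rceil-1=\lfloor d/2\rfloor$ and the estimate is completely symmetric with the first term.

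Finally, I would compare with Theorem \ref{thm_vert}: for $d$ even, $f_0(P^*(n,d))=(n/(d/2))^{d/2}=(n/\lfloor d/2\rfloor)^{\lfloor d/2\rfloor}$, so dividing gives the claimed factor $O(e^{\lfloor d/2\rfloor})$; for $d$ odd, $f_0(P^*(n,d))=((n-1)/\lfloor d/2\rfloor)^{\lfloor d/2\rfloor}$, which for large $n$ differs from $(n/\lfloor d/2\rfloor)^{\lfloor d/2\rfloor}$ only by a bounded factor, again giving $O(e^{\lfloor d/2\rfloor})$.

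The only mild obstacle I anticipate is tracking the $\mathrm{poly}(d)$ slack that gets absorbed into the $O(\cdot)$ (from Stirling, from the $O(1)$ in (\ref{eq_2}), and from the shift $n\to n-\lceil d/2\rceil$ and $n\to n-1$); but since the paper explicitly allows such polynomial-in-$d$ factors to be hidden inside $O(\cdot)$, this is bookkeeping rather than a real difficulty. The key inequality driving the whole bound is (\ref{eq_2}), which converts the binomial into the product of the "polynomial part" $(n/\lfloor d/2\rfloor)^{\lfloor d/2\rfloor}$ matching $f_0(P^*(n,d))$ and the "exponential part" $e^{\lfloor d/2\rfloor}$ matching the claimed gap.
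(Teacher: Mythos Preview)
Your proposal is correct and follows essentially the same route as the paper: rewrite the two binomials so the lower index is about $\lfloor d/2\rfloor$, apply inequality~(\ref{eq_2}), and compare with the formula from Theorem~\ref{thm_vert}. The only cosmetic difference is that the paper first bounds the sum of the two binomials by $2\binom{n-\lceil d/2\rceil}{\lfloor d/2\rfloor}$ and then applies~(\ref{eq_2}) once, whereas you treat the two terms separately with a parity split; your way is slightly more laborious but equally valid.
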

We see that this factor is independent of $n$, hence if $d$ is constant then the number of vertices of $P^*(n,d)$ is asymptotically equal to the number of vertices of the dual cyclic polytope.  

\begin{proof}
Considering only the leading term of $f_0(c^*(n,d))$ and using inequality (\ref{eq_2}) we get
\begin{align*}
 f_0(c^*(n,d)) 
 &=  \binom{n- \lceil d/2 \rceil}{n-d} + \binom{n- \lfloor d/2 \rfloor -1}{n-d} \\
 &\leq 2 \cdot  \binom{n- \lceil d/2 \rceil}{\lfloor d/2 \rfloor} \\
 &\leq O(1) \cdot e^{\lfloor d/2 \rfloor} \cdot \left( \frac{n - \lceil d/2 \rceil}{\lfloor d/2 \rfloor}\right)^{\lfloor d/2 \rfloor} \\
 &\leq O(1) \cdot e^{\lfloor d/2 \rfloor} \cdot \left( \frac{n}{\lfloor d/2 \rfloor}\right)^{\lfloor d/2 \rfloor}.
 \end{align*}
 
% Now since $d = o(n)$, by equation $(\ref{eq_3})$ it follows that 
% \[(n- \lfloor d/2 \rfloor)^{\lceil d/2 \rceil} = \Theta(e^\frac{(\lceil d/2 \rceil)^2}{n}) \cdot n^{\lceil d/2 \rceil},\]
% with $e^\frac{(\lceil d/2 \rceil \cdot \lfloor d/2 \rfloor)}{n} = e^{o(\frac{\lfloor d/2 \rfloor}{2})}$.
 
\noindent Therefore 
 $f_0(c^*(n,d)) \leq O(e^{\lfloor d/2 \rfloor}) \cdot f_0(P^*(n,d)).$
\end{proof}

In the following we do not only compare the number of vertices between $P^*(n,d)$ and $c^*(n,d)$, but also their $f$-vectors. We will see that if $k \leq \lceil d/2 \rceil -1$, then  $f_k(P^*(n,d))$ is by a factor at most $e^{\lfloor d/2 \rfloor}$ larger than $f_k(c^*(n,d))$. If  $k \geq \lceil d/2 \rceil$, then the factor is at most $e^{d-k}$.

\begin{theorem}
\label{thm_compl}
For $d$ even 
\[f_k(P^*(n,d)) = \sum_{r = \max\{0, d/2-k\}}^{d/2 - \lfloor k/2 \rfloor}\binom{d/2}{r}\binom{d/2-r}{d -k - 2r}\left(\frac{n}{d/2}\right)^{d-k-r}.\]
For $d$ odd and $0 < k < d$
\begin{align*}
f_k(P^*(n,d)) 
&= f_{k-1}(P^*(n-1,d-1)) + f_{k}(P^*(n-1,d-1)) \\
&= \sum_{r = \max\{0, \lfloor d/2 \rfloor-(k-1)\}}^{\lfloor d/2 \rfloor - \lfloor (k-1)/2 \rfloor}\binom{\lfloor d/2 \rfloor}{r}\binom{\lfloor d/2 \rfloor-r}{d -k-2r}\left(\frac{n-1}{\lfloor d/2 \rfloor}\right)^{d-k-r}\\
&\hspace{1cm}  + \sum_{r = \max\{0, \lfloor d/2 \rfloor-k\}}^{\lfloor d/2 \rfloor - \lfloor k/2 \rfloor}\binom{\lfloor d/2 \rfloor}{r}\binom{\lfloor d/2 \rfloor-r}{(d-1) - k - 2r}\left(\frac{n-1}{\lfloor d/2 \rfloor}\right)^{(d-1)-k-r}.
\end{align*}
The value of $f_0(P^*(n,d))$ follows from Theorem \ref{thm_vert} and obviously $f_d(P^*(n,d)) = 1$. %and $f_{d-1}(P^*(n,d)) = n$ follows since all constraints are facet-inducing.
\end{theorem}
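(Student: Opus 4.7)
The plan is to exploit the Cartesian-product structure of $P^*(n,d)$. For $d$ even, the constraint system partitions into $d/2$ groups $G_1, \ldots, G_{d/2}$, where $G_i$ only involves the variable pair $(x_{2i-1}, x_{2i})$. Hence
\[P^*(n,d) \;=\; P_1^* \times P_2^* \times \cdots \times P_{d/2}^*,\]
where each $P_i^*$ is a 2-dimensional polygon with $N := n/(d/2)$ vertices and $N$ edges. I would then invoke the standard fact that every nonempty face of a Cartesian product of polytopes is uniquely of the form $F_1 \times \cdots \times F_{d/2}$ in which each $F_i$ is a nonempty face of $P_i^*$, and $\dim(F_1 \times \cdots \times F_{d/2}) = \sum_i \dim F_i$.

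To count $k$-faces I would classify each factor $F_i$ as the full polygon ($\dim F_i = 2$), an edge ($\dim F_i = 1$), or a vertex ($\dim F_i = 0$). Let $r$ be the number of blocks for which $F_i = P_i^*$; then the remaining $d/2 - r$ blocks must contribute a total dimension of $k - 2r$, forcing exactly $k - 2r$ edge-blocks and $d/2 - k + r$ vertex-blocks. The number of role-assignments is the trinomial coefficient $\binom{d/2}{r}\binom{d/2-r}{k-2r}$, and each edge or vertex choice within a block contributes a factor of $N$ (while the full polygon contributes $1$), giving weight $N^{(k-2r)+(d/2-k+r)} = N^{d/2-r}$. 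Re-indexing the sum by $r' := d/2 - k + r$ (the number of vertex-blocks) converts the same trinomial into $\binom{d/2}{r'}\binom{d/2-r'}{d-k-2r'}$ and the power of $N$ into $N^{d-k-r'}$, matching the theorem's stated form. The range $\max\{0,\,d/2 - k\} \leq r' \leq d/2 - \lfloor k/2 \rfloor$ arises from requiring all three trinomial entries to be non-negative; any terms at the boundary falling outside this range vanish because one of the binomials becomes zero.

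For $d$ odd, the extra constraint $x_d \geq 0$ involves a new variable disjoint from those in $P^*(n-1, d-1)$, so the feasible region is the product $P^*(n-1,d-1) \times [0,\infty)$. A nonempty face of this product is obtained by multiplying a nonempty face of $P^*(n-1,d-1)$ with either the vertex $\{0\}$ of the ray (preserving dimension) or with the ray itself (raising dimension by one), which yields the recursion
\[f_k(P^*(n,d)) \;=\; f_k(P^*(n-1,d-1)) + f_{k-1}(P^*(n-1,d-1)).\]
Since $d-1$ is even, the even-dimensional formula applies to $P^*(n-1,d-1)$ with $\lfloor d/2 \rfloor$ blocks each of size $(n-1)/\lfloor d/2 \rfloor$, and direct substitution produces the two-sum expression in the statement.

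The only delicate point is the cosmetic re-indexing in the even case: one must identify the natural counting variable (number of full-polygon factors) with the theorem's summation index (number of vertex-factors) and verify that under the substitution $r' = d/2 - k + r$ the trinomial $\binom{d/2}{r}\binom{d/2-r}{k-2r}$ becomes $\binom{d/2}{r'}\binom{d/2-r'}{d-k-2r'}$, which is an elementary manipulation of factorials. Everything else is routine enumeration together with the standard face structure of Cartesian products.
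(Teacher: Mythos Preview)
Your argument is correct and is essentially the same approach as the paper's, just phrased through the Cartesian-product face lattice rather than through tight-constraint sets: where the paper says ``$G_i$ contributes $0$, $1$, or $2$ (consecutive) constraints to $K$,'' you say ``the $i$-th factor is the full polygon, an edge, or a vertex,'' and these are the same trichotomy. The paper's summation index is already the number of vertex-blocks (your $r'$), so no re-indexing is needed there; your observation that the extra boundary term vanishes when $k$ is odd (since $\binom{\cdot}{-1}=0$) reconciles the upper limit $d/2-\lfloor k/2\rfloor$ in the statement with the natural limit $\lfloor(d-k)/2\rfloor$ from the counting.
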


\begin{proof}
The proof is similar to the proof of Theorem \ref{thm_vert}, we only give the main idea. %and we use the same notation. 
Assume that $d$ is even and $0 \leq k \leq d$. The $k$-faces of $P^*(n,d)$ are induced by certain intersections of $d-k$ constraints of $G$ with $P^*(n,d)$. Let $K$ be $d-k$ constraints from $G$ such that the following holds. For every $1 \leq i \leq d/2$, $G_i$ contains at most two constraints of $K$. If it contains two constraints $h_\ell$ and $h_m$ then they are consecutive, i.e., they define a vertex in $P^*_i$ (see also Figure \ref{fig_ex4}). The intersection of the boundaries of the constraints $K$ with $P^*(n,d)$ are in one to one correspondence with the $k$-faces. This works with a similar argument as in the proof of Theorem \ref{thm_vert}. 
\begin{figure*}[h]
\begin{center}
\includegraphics{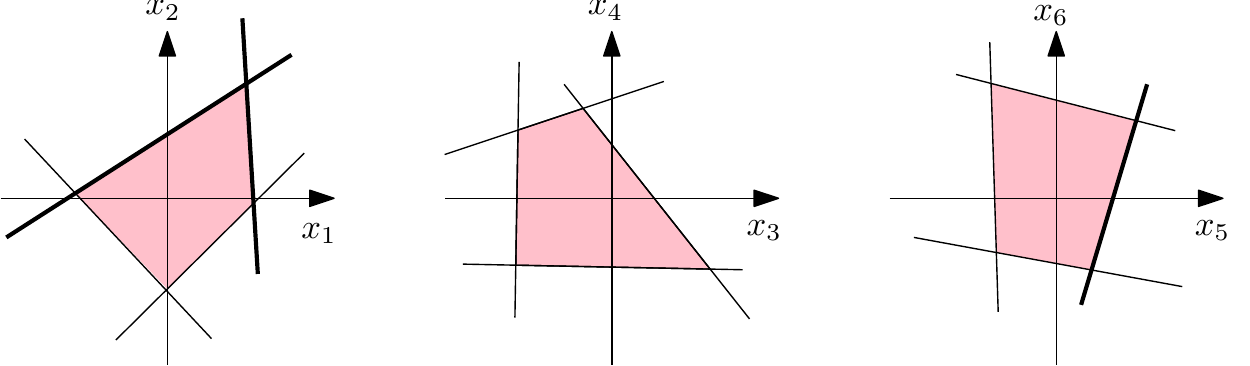}
\end{center}
\caption{Example of $3$-face in $P^*(12,6)$} \label{fig_ex4}
\end{figure*}
%For $1 \leq i \leq d/2$ let $k_i \in \{0,1,2\}$, the number of constraints $K$ contains from $G_i$. W.l.o.g.\ assume that if $k_i = 1$ then $K\cap G_i = \{g_i^1\}$ and if $k_i = 2$ then $K\cap G_i = \{g_i^1, g_i^2\}$. It is clear that $\sum_{i=1}^{d/2}k_i = d-k$.

%As in the proof of  Theorem \ref{thm_vert} we add up all constraints involved and define the halfspace
%\[h: \sum_{i=1}^{d/2}\sum_{\ell = 1}^{k_i}a_i^{\ell}(x_{2i-1}, x_{2i})^T \leq  \sum_{i=1}^{d/2}\sum_{\ell = 1}^{k_i} b_i^{\ell}.\]
%Denote $h$ by $h: a'x \leq b'$ and let $F = h \cap P^*$. 
%Since 
%\[P^*(n,d) = \{x \in R^d \mid (x_{2i-1}, x_{2i}) \in P_i, \text{ for all } 1 \leq i \leq d/2\},\]
%it follows that 
%\[\dim(F) = \dim(h \cap P^*) = \sum_{i=1}^{d/2}\dim(h \cap P^*_i) = \sum_{i=1}^{d/2}(2-k_i) = d - \sum_{i=1}^{d/2}k_i = k.\]
%Moreover by definition of $h$ for $x \in F \cap P^*$,
%\[a'x = b'.\]
%With the same argument as in the proof of  Theorem \ref{thm_vert} for $x \in P^* \setminus F$,
%\[a'x < b'.\]
%It follows that $F$ is a $k$-face of $P^*$.

%Now let $K'$ be $d-k$ constraints of $G$ that do not have the form as described above. Then $K$ does not define a $k$-face of $P^*$, by the same argumentation as in the proof of  Theorem \ref{thm_vert}. 

It remains to count the number of faces that are induced by constraints of form $K$. 
Let us consider the sets in $K$, such that there are exactly $r$ many $G_i$'s that contain two constraints of $K$. There are 
\[\binom{d/2}{r} \binom{d/2-r}{(d-k)-2r}\left(\frac{n}{d/2}\right)^{(d-k)-r}\]
of those. Now if $(d-k) \leq d/2$, then $r$ can be in $\{0, \dots, \lfloor (d-k)/2 \rfloor \}$ and if $(d-k) > d/2$, then $r$ is in $\{d/2 - k, \dots,  \lfloor (d-k)/2 \rfloor$. The claim for even $d$ follows.

The case where $d$ is odd is similar. We will not go into detail but only give the main idea. With the same kind of argumentation as above one can show that the $k$-dimensional faces are induced by $(d-k)$ constraints $K$ of $P^*$ in one of the following ways. In the first case $K$ does not contain the constraint $x_d \geq 0$. Then the constraints in $K$ must induce a $k-1$ face in $\R^{d-1}$, which then induces a $k$-face in $\R^d$. There are $f_{k-1}(P^*(n-1,d-1))$ constraints of this form. In the second case $K$ contains the constraint $x_d \geq 0$. Then the remaining $(d-k)-1$ constraints must induce a $k$ face in $R^{d-1}$. There are $f_{k}(P^*(n-1, d-1))$ constraints of this form. 
\end{proof}

\begin{lemma}
The following tables show the leading terms of $P^*(n,d)$ and $c^*(n,d)$, if $d = o(n)$. 
\end{lemma}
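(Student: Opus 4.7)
The plan is to extract the asymptotically dominant term in $n$ from each of the closed-form expressions given in Theorems \ref{thm_MM} and \ref{thm_compl}, using the hypothesis $d = o(n)$. The key observation is that under this hypothesis one has $\binom{n-a}{b} = \frac{n^b}{b!}(1 + o(1))$ whenever $a$ and $b$ depend only on $d$, so the leading behaviour of any fixed summand is driven entirely by its $n$-exponent. Hence finding the leading term of each sum reduces to locating the summand(s) attaining the largest $n$-exponent and adding their coefficients.

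For $c^*(n,d)$, the case $k \geq \lceil d/2 \rceil$ is immediate from the remark after Theorem \ref{thm_MM}: $f_k(c^*(n,d)) = \binom{n}{d-k}$ has leading term $n^{d-k}/(d-k)!$. For $k < \lceil d/2 \rceil$ one inspects both sums in Theorem \ref{thm_MM}: in the first sum the $n$-exponent equals the running index $r$, so is maximised at $r = \lceil d/2 \rceil - 1$, whereas in the second it equals $d-r$, maximised at $r = \lceil d/2 \rceil$ giving $n^{\lfloor d/2 \rfloor}$. For $d$ even we have $\lceil d/2 \rceil - 1 < \lfloor d/2 \rfloor$, so only the second sum contributes, and the leading term is $\binom{\lceil d/2 \rceil}{k}\, n^{\lfloor d/2 \rfloor} / \lfloor d/2 \rfloor!$. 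For $d$ odd the two exponents coincide, both top-index summands contribute, and the leading coefficient is $\bigl(\binom{\lceil d/2 \rceil}{k} + \binom{\lfloor d/2 \rfloor}{k}\bigr)/\lfloor d/2 \rfloor!$.

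For $P^*(n,d)$ with $d$ even, the summand in Theorem \ref{thm_compl} has $n$-exponent $d-k-r$, which is maximal at the smallest admissible $r$. When $k \geq d/2$ this smallest index is $r = 0$, yielding leading term $\binom{d/2}{d-k}(n/(d/2))^{d-k}$; when $k < d/2$ it is $r = d/2 - k$, and direct substitution collapses $\binom{d/2 - r}{d-k-2r}$ to $\binom{k}{k} = 1$, giving leading term $\binom{d/2}{k}(n/(d/2))^{d/2}$. For odd $d$ I would apply the same analysis to each of $f_{k-1}(P^*(n-1,d-1))$ and $f_k(P^*(n-1,d-1))$ via the recursion in Theorem \ref{thm_compl}, and then keep only the summands whose $n$-exponent is maximal: both for $k < \lceil d/2 \rceil$, and only the first for $k \geq \lceil d/2 \rceil$.

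The main obstacle is not the asymptotics itself but the case analysis at the parity boundaries: verifying that the claimed minimal $r$ lies in the stated admissible range, checking that no second summand tied at the top exponent has been overlooked (this does occur for odd $d$ in $c^*$), and matching the $n$-exponents of $P^*(n,d)$ and $c^*(n,d)$ across the parity of $d$ so that the ratio displayed in the final tables is genuinely independent of $n$, as foreshadowed by Lemma \ref{lemma_vert}.
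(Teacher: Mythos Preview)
Your proposal is correct and follows exactly the approach the paper takes: the paper's entire proof is the single sentence ``The proof of the lemma follows by checking the formulas of $P^*(n,d)$ and $c^*(n,d)$,'' and your outline supplies precisely that check---locating the summand of maximal $n$-exponent in each expression from Theorems~\ref{thm_MM} and~\ref{thm_compl} and reading off its coefficient. If anything you are more careful than the paper (e.g.\ noticing the tie at exponent $\lfloor d/2\rfloor$ for odd $d$ in the $c^*$ formula), but such discrepancies are absorbed into the $\mathrm{poly}(d)$ factors of Theorem~\ref{thm_mainLI2} and do not affect the intended use of the tables.
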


\begin{figure*}[h]   
\begin{center}
\begin{tabular}{ l|l|l| }
   & $P^*(n,d)$ & $c^*(n,d)$\\ \hline
  $k \leq d/2$ & $\binom{d/2}{k}\cdot \left(\frac{n}{d/2}\right)^{d/2}$ & $\binom{d/2}{k} \cdot \binom{n-d/2-1}{d/2}$\\ \hline
  $k > d/2$ & $\binom{d/2}{d-k}\cdot \left(\frac{n}{d/2}\right)^{d-k}$ & $\binom{n-k-1}{d-k}$\\
  \hline
    \multicolumn{3}{c}{$d$ even} \\
\end{tabular}
\end{center}
\label{fig_even}
\end{figure*}

\begin{figure*}[h]   
\begin{center}
\begin{tabular}{ l|l|l| }
   & $P^*(n,d)$ & $c^*(n,d)$\\ \hline
  $k \leq \lfloor d/2 \rfloor$ & $\left(\binom{\lfloor d/2 \rfloor}{k} + \binom{\lfloor d/2 \rfloor}{k+1}\right)\cdot \left(\frac{n-1}{\lfloor d/2 \rfloor}\right)^{\lfloor d/2 \rfloor}$ & $\binom{\lceil d/2 \rceil}{k} \cdot \binom{n-\lceil d/2 \rceil-1}{\lfloor d/2 \rfloor}$\\ \hline
  $k \geq \lceil d/2 \rceil $ & $\binom{\lfloor d/2 \rfloor}{d-k}\cdot \left(\frac{n-1}{\lfloor d/2 \rfloor}\right)^{d-k}$ & $\binom{n-k-1}{d-k}$\\
  \hline
    \multicolumn{3}{c}{$d$ odd} \\
\end{tabular}
\end{center}
\label{fig_odd}
\end{figure*}

The proof of the lemma follows by checking the formulas of $P^*(n,d)$ and $c^*(n,d)$.
%In the following we compare the leading terms of $P^*(n,d)$ and $c^*(n,d)$. Since we do not compare the exact values, but only the leading terms we will not exactly compute the polynomial terms in $d$, but denote them by $\text{poly(d)}$. Doing the analysis carefully, one can check that all these terms behave very nicely, e.g., $\text{poly(d)} = O(d^2)$ for all terms.
\begin{theorem}
\label{thm_mainLI2}
For $k < \lceil d/2 \rceil$
\[f_k(c^*(n,d)) =O(\text{poly}(d) \cdot e^{\lfloor d/2 \rfloor})\cdot f_k(P^*(n,d)),\]
and for $k \geq \lceil d/2 \rceil$
\[f_k(c^*(n,d)) = O(\text{poly}(d) \cdot e^{d-k})\cdot f_k(P^*(n,d)),\]
where $\text{poly}(d)$ is some polynomial in $d$.
\end{theorem}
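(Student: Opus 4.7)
The plan is to read off the leading-in-$n$ terms of $f_k(P^*(n,d))$ and $f_k(c^*(n,d))$ directly from the two tables in the preceding lemma, form the ratio, and bound it by splitting into the four cases given by the parity of $d$ and whether $k<\lceil d/2\rceil$ or $k\ge\lceil d/2\rceil$. In each case the ratio factors cleanly into a purely $d$-dependent piece (made of binomial coefficients of the form $\binom{\lfloor d/2\rfloor}{\cdot}$) and a piece of the shape $\binom{n-c}{m}/(n/m)^{m}$, where $m\in\{\lfloor d/2\rfloor,d-k\}$. The second piece is exactly what inequality~(\ref{eq_2}) was designed to handle.

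For $k<\lceil d/2\rceil$ the bottom row of exponents of $n$ is $\lfloor d/2\rfloor$ on both sides. When $d$ is even the combinatorial prefixes agree (both are $\binom{d/2}{k}$), so the ratio reduces to $\binom{n-d/2-1}{d/2}/(n/(d/2))^{d/2}$ and (\ref{eq_2}) gives $O(e^{d/2})$. When $d$ is odd, Pascal's identity rewrites $\binom{\lfloor d/2\rfloor}{k}+\binom{\lfloor d/2\rfloor}{k+1}=\binom{\lceil d/2\rceil}{k+1}$, so the prefix ratio becomes $\binom{\lceil d/2\rceil}{k}/\binom{\lceil d/2\rceil}{k+1}=(k+1)/(\lceil d/2\rceil-k)\le\lceil d/2\rceil$, which is absorbed into $\mathrm{poly}(d)$; the $n$-dependent piece is again $O(e^{\lfloor d/2\rfloor})$ by~(\ref{eq_2}).

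For $k\ge\lceil d/2\rceil$ the leading term on the $c^*$ side is simply $\binom{n-k-1}{d-k}$, while the $P^*$ side is $\binom{\lfloor d/2\rfloor}{d-k}\cdot(n/\lfloor d/2\rfloor)^{d-k}$ (up to the harmless replacement of $n$ by $n-1$ when $d$ is odd). Applying~(\ref{eq_2}) to $\binom{n-k-1}{d-k}$ yields a factor $O(e^{d-k})\cdot(n/(d-k))^{d-k}$, and after cancellation the residual $d$-dependent factor is
\[
\frac{(\lfloor d/2\rfloor/(d-k))^{d-k}}{\binom{\lfloor d/2\rfloor}{d-k}}.
\]
This is the part I expect to be the main obstacle, because it is not bounded by a constant; one has to show it is only polynomially large in $d$. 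The plan is to apply Stirling's formula via~(\ref{eq_binom}) from below to $\binom{\lfloor d/2\rfloor}{d-k}$. After the algebra, the quotient collapses to
\[
O\!\left(\sqrt{d-k}\right)\cdot\left(\frac{\lfloor d/2\rfloor-(d-k)}{\lfloor d/2\rfloor}\right)^{\lfloor d/2\rfloor-(d-k)},
\]
in which the base of the exponential is $\le 1$ (since $d-k\le\lfloor d/2\rfloor$ in this range), so the whole expression is $O(\sqrt{d})=\mathrm{poly}(d)$.

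Putting the two regimes together gives $f_k(c^*)\le O(\mathrm{poly}(d)\cdot e^{\lfloor d/2\rfloor})\cdot f_k(P^*)$ for $k<\lceil d/2\rceil$ and $f_k(c^*)\le O(\mathrm{poly}(d)\cdot e^{d-k})\cdot f_k(P^*)$ for $k\ge\lceil d/2\rceil$, matching the statement. The only other bookkeeping step is to verify that passing from exact $f$-vectors to leading-in-$n$ terms costs at most a multiplicative $\mathrm{poly}(d)$ (because each $f$-vector is a sum of at most $d+1$ binomial summands, each bounded by the leading one up to a factor polynomial in $d$), which is absorbed into the $\mathrm{poly}(d)$ factor in the conclusion.
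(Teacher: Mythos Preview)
Your approach is the same as the paper's: take the leading-in-$n$ terms from the tables in the preceding lemma, match the $\binom{\lceil d/2\rceil}{k}$-type prefixes (via Pascal's rule in the odd case, exactly as the paper does when it writes $\binom{\lfloor d/2\rfloor}{k}+\binom{\lfloor d/2\rfloor}{k+1}=\mathrm{poly}(d)\cdot\binom{\lceil d/2\rceil}{k}$), and control the remaining $n$-dependent binomial with inequality~(\ref{eq_2}). The paper only spells out the case $k<\lceil d/2\rceil$ and declares $k\ge\lceil d/2\rceil$ ``similar,'' so your explicit treatment of that second regime is additional detail rather than a different argument. One simplification there: your Stirling computation for the residual factor
\[
\frac{(\lfloor d/2\rfloor/(d-k))^{d-k}}{\binom{\lfloor d/2\rfloor}{d-k}}
\]
is more work than needed, since the elementary inequality $\binom{N}{m}\ge (N/m)^m$ (each factor $(N-i)/(m-i)\ge N/m$) already shows this quantity is at most~$1$, hence trivially $\mathrm{poly}(d)$.
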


%\begin{figure*}   
%\begin{center}
%\begin{tabular}{ l|l|l| }
%   & $d$ even & $d$ odd \\ \hline
%  $k \leq \lfloor d/2 \rfloor$ & $\text{poly}(d)\Theta(e^{d/2})$ & $\text{poly}(d)\Theta(e^{\lfloor d/2 \rfloor})$\\ \hline
%  $k \geq \lceil d/2 \rceil $ & $\text{poly}(d)O(e^{d/2})$ & $n\cdot\text{poly}(d)O(e^{d/2})$\\
%  \hline
  %  \multicolumn{3}{c}{$d$ odd} \\
%\end{tabular}
%\end{center}
%\end{figure*}

\begin{proof}
We only consider the case where $k \leq \lceil d/2 \rceil -1$, as the case where $k \geq  \lceil d/2 \rceil$ is similar. We will prove the statement for odd $d$, the case where $d$ is even follows immediately by replacing all $\lfloor d/2 \rfloor$ and $\lceil d/2 \rceil$ by $d/2$.
First note that the leading term of $P^*(n,d)$ can be written as
\begin{align*}
\left(\binom{\lfloor d/2 \rfloor}{k} + \binom{\lfloor d/2 \rfloor}{k+1}\right)\cdot \left(\frac{n-1}{\lfloor d/2 \rfloor}\right)^{\lfloor d/2 \rfloor} = \text{poly}(d)\cdot\binom{\lceil d/2 \rceil}{k}\cdot \left(\frac{n-1}{\lfloor d/2 \rfloor}\right)^{\lfloor d/2 \rfloor}.
\end{align*}
For the term of $c^*(n,d)$ we have
\begin{align*}
\binom{\lceil d/2 \rceil}{k} \cdot \binom{n-\lceil d/2 \rceil-1}{\lfloor d/2 \rfloor} &\stackrel{(\ref{eq_2})}{\leq}  O(1) \cdot e^{\lfloor d/2\rfloor}\cdot \binom{\lceil d/2 \rceil}{k}\cdot\left(\frac{n-\lceil d/2 \rceil-1}{\lfloor d/2 \rfloor}\right)^{\lfloor d/2\rfloor}  \\
%&\stackrel{(\ref{eq_3})}{\leq} \text{poly}(d) \cdot e^{\lfloor d/2\rfloor}\cdot\binom{\lceil d/2 \rceil}{k}\cdot  \underbrace{\Theta\left(e^{-\frac{\lfloor d/2 \rfloor \cdot \lceil d/2 \rceil}{n}}\right)}_{=\Theta(e^{o(\lfloor d/2 \rfloor)})}\cdot\left(\frac{n-1}{\lfloor d/2 \rfloor}\right)^{\lfloor d/2\rfloor}  \\
 & \leq O(1) \cdot e^{\lfloor d/2\rfloor}\cdot\binom{\lceil d/2 \rceil}{k}\cdot \left(\frac{n-1}{\lfloor d/2 \rfloor}\right)^{\lfloor d/2\rfloor}.
\end{align*}
Therefore if $k \leq \lceil d/2 \rceil -1$,
it holds that $f_k(c^*(n,d)) = O(\text{poly}(d)\cdot e^{\lfloor d/2 \rfloor}) \cdot f_k(P^*(n,d)).$
\end{proof}

\section{Upper Bound on Maximum Complexity of $LI(2)$}
In this section we show that no polytope in $LI(2)$ can achieve the complexity of the dual cyclic polytope. To our knowledge, this is the first time such bounds are given. In Lemma \ref{lemma_half} we show that for all polytopes $P$ in $LI(2)$, $d \geq 4$ and $\lceil d/2 \rceil \leq k \leq d-2$ it holds that
$f_k(P) < f_k(c^*(n,d))$. Using this result in Theorem \ref{thm_slack}, we show that this holds for all $k \leq d-2$.

\begin{lemma}
\label{lemma_half}
Let $P$ be any polytope in $LI(2)$ given by $n$ nonredundant constraints, $d \geq 4$ and denote by $n'$ the number of constraints that contain exactly two variables per inequality. As for each index $i \in [d]$ there are at most two inequalities that contain only $x_i$ it follows that $n- 2d \leq n' \leq n$. Then for all $\lceil d/2 \rceil \leq k \leq d-2$ we have
\[f_k(P) < f_k(c^*(n,d)).\]
In particular
\[f_{d-2}(P) \leq \binom{n}{2} - \frac{\binom{n'}{2}}{\binom{d}{2}} + n' < \binom{n}{2} = f_{d-2}(c^*(n,d).)\]
\end{lemma}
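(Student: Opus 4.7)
The plan is to upper-bound $f_{d-2}(P)$ by counting which pairs of facets can actually share a ridge and isolating an obstruction specific to $LI(2)$; the strict inequality for general $k$ in the stated range will then follow by exhibiting a single $(d-k)$-subset of constraints that fails to define a $k$-face. For each pair $\{i,j\}\subseteq[d]$, write $G_{ij}$ for the set of constraints of $P$ whose support is exactly $\{x_i,x_j\}$ and $m_{ij}:=|G_{ij}|$, so that $\sum_{\{i,j\}}m_{ij}=n'$. The constraints of $G_{ij}$ define a convex polygon $Q_{ij}$ in the $(x_i,x_j)$-plane. A point of $P$ at which only one $c\in G_{ij}$ is tight (which exists by nonredundancy in $P$) projects to a point of $Q_{ij}$ with the same property, so each $c\in G_{ij}$ is nonredundant in $Q_{ij}$; hence $G_{ij}$ is in bijection with the $m_{ij}$ edges of $Q_{ij}$, and the projection of $P$ to the $(x_i,x_j)$-plane is contained in $Q_{ij}$.

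The key geometric observation is that two constraints $c_1,c_2\in G_{ij}$ can form a ridge of $P$ only if they correspond to adjacent edges of $Q_{ij}$. Non-adjacent edges of a convex polygon meet, if at all, at a point lying strictly outside the polygon; any $x\in P$ with both $c_1,c_2$ tight would project to that point, contradicting the inclusion above. Consequently, of the $\sum_{\{i,j\}}\binom{m_{ij}}{2}$ same-support pairs, at most $\sum_{\{i,j\}}m_{ij}=n'$ (the adjacent ones) can yield ridges, while the remaining $\binom{n}{2}-\sum_{\{i,j\}}\binom{m_{ij}}{2}$ pairs of differing support may a priori all be ridges. This gives
\[
f_{d-2}(P)\;\le\;\binom{n}{2}-\sum_{\{i,j\}}\binom{m_{ij}}{2}+n'.
\]
A convexity / Cauchy--Schwarz estimate on the $\binom{d}{2}$ nonnegative integers $m_{ij}$ summing to $n'$ then furnishes $\sum_{\{i,j\}}\binom{m_{ij}}{2}\ge\binom{n'}{2}/\binom{d}{2}$, which substituted above produces the stated inequality; the strict comparison $f_{d-2}(P)<\binom{n}{2}$ follows once $\binom{n'}{2}/\binom{d}{2}>n'$, equivalently $n'>d(d-1)+1$, which is automatic for $n$ large enough (using $n'\ge n-2d$).

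For general $k$ with $\lceil d/2\rceil\le k\le d-2$, the remark after Theorem~\ref{thm_MM} gives $f_k(c^*(n,d))=\binom{n}{d-k}$: every $(d-k)$-subset of constraints defines a $k$-face of the dual cyclic polytope. By the same geometric obstruction, any $(d-k)$-subset of constraints of $P$ that contains a non-adjacent same-support pair fails to define a $k$-face, because already two of its hyperplanes cannot be simultaneously tight at a point of $P$. The existence of even a single such bad pair---guaranteed by pigeonhole whenever some $m_{ij}\ge 3$, which is forced once $n$ is large enough relative to $d$---therefore yields $f_k(P)<\binom{n}{d-k}=f_k(c^*(n,d))$.

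The main technical hurdle is the geometric reduction to planar edge-adjacency: one must confirm carefully that each $c\in G_{ij}$ remains nonredundant in $Q_{ij}$, and that non-adjacency of edges in $Q_{ij}$ genuinely prevents the corresponding constraints from being simultaneously tight on $P$. Both rest on the inclusion of the projection of $P$ inside $Q_{ij}$ and the standard fact that non-adjacent edges of a convex polygon have their supporting lines meeting outside the polygon. Once this geometric step is in hand, the counting and convexity estimates are routine.
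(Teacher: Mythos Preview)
Your argument is essentially the same as the paper's: partition the two-variable constraints by their support $\{i,j\}$, observe that within each class only ``adjacent'' pairs can yield ridges (via the planar polygon $Q_{ij}$), and then bound $\sum\binom{m_{ij}}{2}$ from below by convexity. The paper is terser---it simply invokes the polygon argument from the proof of Theorem~\ref{thm_compl} rather than spelling out the projection and nonredundancy-preservation steps you write down---but the structure is identical, and your extra care about why each $c\in G_{ij}$ stays nonredundant in $Q_{ij}$ is a genuine improvement in rigor. You also correctly flag that the final strict inequality $\binom{n'}{2}/\binom{d}{2}>n'$ needs $n'>d(d-1)+1$; the paper's statement and proof silently assume $n$ is large enough for this, so you are not missing anything the paper supplies.
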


\begin{proof}
Let us focus on the case of $f_{d-2}(P)$. In the dual cyclic polytope we know that any two facets are adjacent, i.e., their intersection defines a $(d-2)$-face. In $LI(2)$ however, not every two facets can be adjacent. Assume $P$ is given by $n$ constraints with index set $E$. For $i<j \in [d]$ let $E_{ij}$ be the indices of the constraints that contain $x_i$ and $x_j$ and denote $|E_{ij}| = n_{ij}$. As in the proof of Theorem \ref{thm_compl} we know that out of the $\binom{n_{ij}}{2}$ pairs only $n_{ij}$ pairs are adjacent. Summing over all $i<j$ it follows that at least
\[\sum_{i<j} \left(\binom{n_{ij}}{2}- n_{ij}\right)\]
pairs of facets in $P$ are not adjacent. 
Now using that $\sum_{i<j} n_{ij} = n'$ and that the sum is minimized if all $n_{ij}$ have the same size $n'/ \binom{d}{2}$, we get that 
\begin{align*}
\sum_{i<j} \left(\binom{n_{ij}}{2}- n_{ij}\right)  
&\geq \binom{d}{2}\cdot \binom{\frac{n'}{\binom{d}{2}}}{2} - n' \\
& = \frac{1}{\binom{d}{2}} \cdot \frac{n'\cdot\left(n'- \frac{1}{\binom{d}{2}}\right)}{2} - n' \\
&\geq \frac{\binom{n'}{2}}{\binom{d}{2}} -n'.
\end{align*}
The claim for $k=d-2$ follows. For other values of $k$ one can similarly show that not all $(d-k)$-tuples of constraints define a $k$-face in $P$. 
\end{proof}

\begin{theorem}
\label{thm_slack}
Let $P$ be any $d$-dimensional polytope in $LI(2)$ given by $n$ nonredundant constraints, where $d \geq 4$. Then for all $ k \leq d-2$ we have
\[f_k(P) < f_k(c^*(n,d)).\]
In particular 
\[f_{k}(P) \leq f_k(c^*(n,d)) - \binom{d-2}{k}\cdot\left(\frac{\binom{n'}{2}}{\binom{d}{2}} + n'\right),\]
where $n'$ is defined as in Lemma \ref{lemma_half}.
\end{theorem}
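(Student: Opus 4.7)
The plan is to boost Lemma \ref{lemma_half} from pairs of facets (the case $k=d-2$) to arbitrary $(d-k)$-subsets of constraints, using essentially the same averaging idea but one step further.

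First I would strengthen the geometric content of the ``non-adjacent pair'' in Lemma \ref{lemma_half}. Such a pair $\{h_\ell,h_m\}\subseteq E_{ij}$ corresponds to two non-consecutive edges of the 2-dimensional polygon $P^*_{ij}$, so the intersection of their bounding lines in the $(x_i,x_j)$-plane lies strictly outside $P^*_{ij}$. Therefore some other constraint of $E_{ij}$ is strictly violated at every point of the $(d-2)$-flat $\{h_\ell=0\}\cap\{h_m=0\}$, and this flat is disjoint from $P$ altogether. Consequently no face of $P$, of any dimension, can have both $h_\ell$ and $h_m$ among its tight constraints.

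Next I would bound $f_k(P)$ by $(d-k)$-subset counting. Every $k$-face of $P$ can be assigned (by choosing any minimal tight set) a $(d-k)$-subset of $[n]$ whose defining hyperplanes have nonempty common intersection with $P$; by the previous paragraph such a subset cannot contain any non-adjacent pair. Let $N:=\binom{n'}{2}/\binom{d}{2}-n'$ be the lower bound on the number of non-adjacent pairs from Lemma \ref{lemma_half}. Each non-adjacent pair is contained in $\binom{n-2}{d-k-2}$ $(d-k)$-subsets, and each ``bad'' $(d-k)$-subset is created by at most $\binom{d-k}{2}$ non-adjacent pairs, so double counting yields at least $N\cdot\binom{n-2}{d-k-2}/\binom{d-k}{2}$ distinct $(d-k)$-subsets that are forbidden in $P$ but available in $c^*(n,d)$. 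For $k\geq\lceil d/2\rceil$ the dual neighborliness of the cyclic polytope gives $f_k(c^*(n,d))=\binom{n}{d-k}$, and the strict inequality follows at once.

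The main obstacle is the case $k<\lceil d/2\rceil$, where $f_k(c^*(n,d))$ is no longer $\binom{n}{d-k}$: in $c^*(n,d)$ most $k$-faces arise from tight sets larger than $d-k$, so the ``missing $(d-k)$-subset'' count does not translate directly into a $k$-face comparison. To handle this I would propagate the $(d-2)$-face deficit of Lemma \ref{lemma_half} downward along the face lattice: each $(d-2)$-face of $c^*(n,d)$ that is missing in $P$ removes a block of $k$-subfaces, and the combinatorial factor $\binom{d-2}{k}$ in the statement reflects the worst-case multiplicity with which the same $k$-face of $c^*(n,d)$ can appear in different missing $(d-2)$-faces. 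Combined with the lower bound $\binom{n'}{2}/\binom{d}{2}-n'$ on missing $(d-2)$-faces coming from Lemma \ref{lemma_half}, this yields the stated strict inequality for every $k\leq d-2$; the quantitative ``in particular'' bound then falls out by collecting the constants in the double counting.
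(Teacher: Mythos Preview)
Your treatment of the range $k\ge\lceil d/2\rceil$ is essentially what the paper does in Lemma~\ref{lemma_half}, so that part is fine. The genuine gap is in the step you flag as the ``main obstacle'', namely $k<\lceil d/2\rceil$.

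The argument ``each $(d-2)$-face of $c^*(n,d)$ that is missing in $P$ removes a block of $k$-subfaces'' does not go through. There is no correspondence between faces of $P$ and faces of $c^*(n,d)$, so speaking of a $(d-2)$-face of $c^*(n,d)$ being ``missing in $P$'' has no combinatorial meaning beyond the numerical inequality $f_{d-2}(P)<f_{d-2}(c^*(n,d))$. And a purely numerical deficit at level $d-2$ says nothing about lower levels: one can have simple polytopes with $f_{d-2}$ strictly below the cyclic value while some lower $f_k$ equals or is only infinitesimally below it, unless one invokes further structure. In particular, inside $c^*(n,d)$ every $k$-face lies in many $(d-2)$-faces, so losing some $(d-2)$-faces need not force the loss of any $k$-face; your ``worst-case multiplicity $\binom{d-2}{k}$'' reading is not what that coefficient is.

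What the paper actually does is pass to the $h$-vector. From $h_{d-2}(P)=f_{d-2}(P)-(d-1)f_{d-1}(P)+\binom{d}{2}f_d(P)$ together with $f_{d-1}(P)=n$, $f_d(P)=1$ and Lemma~\ref{lemma_half}, one gets a deficit in $h_{d-2}(P)$ relative to $h_{d-2}(c^*(n,d))$. The crucial external input is the Strengthened Upper Bound Theorem (Kalai's refinement of McMullen): $h_r(P)\le h_r(c^*(n,d))$ for \emph{every} $r$. Plugging both into $f_k(P)=\sum_{r\ge k}\binom{r}{k}h_r(P)$ immediately yields
\[
f_k(P)\le f_k(c^*(n,d))-\binom{d-2}{k}\Bigl(\text{deficit in }h_{d-2}\Bigr),
\]
and the factor $\binom{d-2}{k}$ is simply the coefficient of $h_{d-2}$ in that sum, not a face-lattice multiplicity. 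Without the $h$-vector inequalities there is no mechanism to push the deficit down to small $k$, and your proposal does not supply a substitute for them.
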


Although asymptotically the bounds that we prove are the same as the bounds of the dual cyclic polytope, this shows that the dual cyclic polytope is not realizable in $LI(2)$. 

Before proving this theorem we introduce a few notions used in the proof of McMullen's Upper Bound Theorem (for more details see  \cite{McMullen, Kalai, Fukuda2016}). From now on we only consider simple $d$-dimensional polytopes given by $n$ nonredundant constraints. A polytope $P$ is called \emph{simple}, if every vertex of $P$ is satisfies exactly $d$ inequalities with equality. We observe that by small perturbations, for any $d$-dimensional $P'$ in $LI(2)$ given by $n$ inequalities there exists a simple polytope $P$ in $LI(2)$ with $f_k(P') \leq f_k(P)$ for all $k \in [d]$. Let us denote the family of simple $d$-dimensional polytopes in $LI(2)$ by $SLI(2)$.

Let $P$ be any polytope in $SLI(2)$, given by $n$ nonredundant constraints. We consider a linear program with objective value $c^Tx$, subject to those constraints. We assume that $c$ is generic, i.e., no edge of $P$ is parallel to the hyperplane given by $c^Tx = 0$. We now orient every edge of $P$ w.r.t.\ $c^Tx$, towards the vertex with higher objective value. Let us denote the graph defined by those directed edges by $\overrightarrow{G}(P)$. Now for $i = 0, \dots , d$ we denote by $h_i(\overrightarrow{G}(P))$ the number of vertices with indegree $i$.

By double counting one can show that $h_i(\overrightarrow{G}(P))$ is independent of the objective value, hence we can write $h_i(\overrightarrow{G}(P)) = h_i(P)$ %(for more detail please see \cite{Fukuda2016}). 
Let $k$ be fixed, we count the pairs $(F, v)$ of $k$ faces $F$ with unique sink $v$. By definition of $\overrightarrow{G}(P)$ every face has a unique sink, hence there are exactly $f_k(P)$ many such pairs. 
On the other hand by properties of simple polytopes it holds that for any $k$ distinct edges to $v$, there exists a unique $k$-face containing the $k$ edges. 
Let $v$ be fixed and let $r$ be the indegree of $v$. 
Summing over all indegrees $r \geq k$ it follows that for all $k = 0, \dots , d$,
\begin{equation}
\label{eq_LI1}
\sum_{r=k}^d h_r(\overrightarrow{G}(P)) \binom{r}{k} = f_k(P).
\end{equation}

\noindent Solving this system of linear equalities one can show that for all $i = 0, \dots, d$,
\begin{equation}
\label{eq_LI2}
h_i(P): = h_i(\overrightarrow{G}(P)) = \sum_{k=i}^d(-1)^{k-i}\binom{k}{i}f_k(P).
\end{equation}
Hence $h_i(P)$ is independent of the objective value.

To prove Theorem \ref{thm_slack} we use the following strengthened version of McMullen's theorem, which holds for any simple polytopes. This strengthening was first given by Kalai in \cite{Kalai} with a small correction made by Fukuda in \cite[Chapter 7]{Fukuda2016}. Note that Theorem \ref{thm_strongerMM} implies McMullen's theorem, %\ref{thm_MM})
 since by (\ref{eq_LI1}) we know that each $f_k(P)$ is a nonnegative linear combination of the $h_r(P)$'s.
\begin{theorem} [Strengthened Upper Bound Theorem \cite{McMullen, Fukuda2016}] 
\label{thm_strongerMM}
Let $P$ be a simple polytope given by $n$ nonredundant constraints. Then for all $i = 0, \dots , d$ it holds that
\[h_i(P) \leq h_i(c^*(n,d)).\]
\end{theorem}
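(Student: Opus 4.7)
The plan is to reduce the claim to a single inequality in the range $0 \leq i \leq \lfloor d/2 \rfloor$ and then invoke the classical machinery behind the Upper Bound Theorem. First I would establish the Dehn--Sommerville relation $h_i(P) = h_{d-i}(P)$ directly from the definition given in the excerpt: since $h_i(P) = h_i(\overrightarrow{G}(P))$ is independent of the generic objective $c$, I can replace $c$ by $-c$, which reverses every edge of $P$ and thus turns a vertex of indegree $r$ into one of indegree $d - r$. Counting the two orientations against each other forces $h_r(P) = h_{d-r}(P)$. The same symmetry applies to $c^*(n,d)$, so it suffices to show $h_i(P) \leq h_i(c^*(n,d))$ for $i \leq \lfloor d/2 \rfloor$.

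Second, I would compute $h_i(c^*(n,d))$ explicitly. Plugging the formula of Theorem \ref{thm_MM} into (\ref{eq_LI2}) and simplifying the alternating sum via Vandermonde and Pascal identities produces the well-known value
\[
h_i(c^*(n,d)) = \binom{n-d+i-1}{i}, \qquad 0 \leq i \leq \lfloor d/2 \rfloor.
\]
This step is routine bookkeeping that I would sanity-check on small cases such as $d = 2, 3$ before writing the general manipulation.

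The heart of the argument is the inequality $h_i(P) \leq \binom{n-d+i-1}{i}$ for an arbitrary simple $d$-polytope $P$ with $n$ nonredundant facets, and this is the main obstacle. My approach is to pass to the dual simplicial polytope $\Delta = P^*$, whose $n$ vertices correspond to the facets of $P$, and to use the line shelling of $\Delta$ induced by the generic direction $c$. Under this correspondence, $h_i(P)$ equals the number of facets of $\Delta$ whose restriction set in the shelling has size $i$. From here the cleanest route is Stanley's Cohen--Macaulay argument: the Stanley--Reisner ring of $\Delta$ modulo a generic linear system of parameters is an Artinian graded algebra whose Hilbert function is $(h_0(\Delta), h_1(\Delta), \ldots, h_d(\Delta))$, and since it is generated in degree $1$ by at most $n - d$ classes, its $i$-th graded piece has dimension bounded by $\binom{n-d+i-1}{i}$. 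If one wishes to stay elementary, the same bound can be extracted from a careful accounting of how restriction sets of size $\leq \lfloor d/2 \rfloor$ interact in the shelling (McMullen's original method), but this is substantially more delicate. In either formulation, the cap $i \leq \lfloor d/2 \rfloor$ is essential---beyond it the Artinian reduction need not be generated in degree $1$---which is precisely why the Dehn--Sommerville reduction of the first step is indispensable for the upper half.
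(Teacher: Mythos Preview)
The paper does not prove Theorem~\ref{thm_strongerMM} at all: it is stated with references to \cite{McMullen, Fukuda2016} (and the strengthening is attributed to Kalai \cite{Kalai}) and then used as a black box in the proof of Theorem~\ref{thm_slack}. So there is no ``paper's own proof'' to compare against; your sketch is essentially the classical argument one finds in those references, and as an outline it is correct.

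One small inaccuracy worth fixing: your explanation of why the Dehn--Sommerville reduction is needed is off. The Cohen--Macaulay/Stanley argument gives $h_i(P) \leq \binom{n-d+i-1}{i}$ for \emph{every} $i$, not just $i \leq \lfloor d/2 \rfloor$; the Artinian reduction is always generated in degree~$1$ by the $n-d$ surviving vertex classes, so the Hilbert-function bound is valid throughout. The reason one still needs $h_i = h_{d-i}$ is that for $i > \lfloor d/2 \rfloor$ the target value $h_i(c^*(n,d))$ equals $h_{d-i}(c^*(n,d)) = \binom{n-i-1}{d-i}$, which is strictly smaller than $\binom{n-d+i-1}{i}$; the Cohen--Macaulay bound alone is too weak there. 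So the symmetry is used to transport the sharp bound from the lower half to the upper half, not because the algebraic bound fails to hold.
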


\begin{proof}[Proof of Theorem \ref{thm_slack}]
Let $P$ be any polytope in $SLI(2)$.  By Lemma \ref{lemma_half} the theorem holds for $\lceil d/2 \rceil \leq k \leq d-2$ (since $d \geq 4$ it holds in particular for $k= d-2$). 
We claim that
 \[h_{d-2}(P) \leq h_{d-2}(c^*(n,d))- \frac{\binom{n'}{2}}{\binom{d}{2}} + n'.\] 
%The theorem then follows for all $k < d-2$ from equation (\ref{eq_LI1}) and Theorem \ref{thm_strongerMM}. It hence remains to show the claim. 
By equation (\ref{eq_LI2})
\[h_{d-2}(P) = f_{d-2}(P) - (d-1)f_{d-1}(P) + \binom{d}{d-2}f_d(P).\]
We know that 
\[f_{d-1}(P) = f_{d-1}(c^*(n,d)) = n \text{ and } f_d(P) = f_d(c^*(n,d)) = 1.\]
Furthermore by Lemma \ref{lemma_half} we know 
\[f_{d-2}(P) \leq f_{d-2}(c^*(n,d)) - \frac{\binom{n'}{2}}{\binom{d}{2}} + n'.\]
It follows that 
\begin{align*}
h_{d-2}(P) &= f_{d-2}(P) - (d-1)f_{d-1}(P) + \binom{d}{d-2}f_d(P) \\
&\leq f_{d-2}(c^*(n,d))  - \frac{\binom{n'}{2}}{\binom{d}{2}} + n' - (d-1)f_{d-1}(c^*(n,d)) + \binom{d}{d-2}f_d(c^*(n,d)) \\
&= h_{d-2}(c^*(n,d)) - \frac{\binom{n'}{2}}{\binom{d}{2}} + n',
\end{align*}
which shows the claim.
By equation (\ref{eq_LI1}) and Theorem \ref{thm_strongerMM} for $k \leq d-2$ it follows that 
\begin{align*}
 f_k(P) &= \sum_{r=k}^d  \binom{r}{k}  h_r(P) \\
 &\leq   \sum_{r=k}^d \binom{r}{k} h_r(c^*(n,d)) - \binom{d-2}{k}\left(\frac{\binom{n'}{2}}{\binom{d}{2}} + n'\right) \\
 &= f_k(c^*(n,d)) - \binom{d-2}{k}\left(\frac{\binom{n'}{2}}{\binom{d}{2}} + n'\right).
\end{align*}
\end{proof}

\begin{remark}
One can show that for $d=3$, the bounds of McMullen's Upper Bound Theorem can be achieved in $LI(2)$. Let $P$ be a polytope given by $n-2$ constraints in variables $x_1$ and $x_2$, such that they define a polygon with $n-2$ vertices in two dimensions. We furthermore add the constraints $x_3 \geq 0$ and $x_3 \leq 1$. We can easily observe that $f_0 = 2n-4$ and $f_1 = 3n -6$. Those are exactly the bounds achieved by the dual cyclic polytope. 
\end{remark}

\section{Discussion and Open Questions}
We saw that $f_k(P^*(n,d))$ differs from $f_k(c^*(n,d))$ by a factor $O(e^{\lfloor d/2 \rfloor})$ if $k < \lceil d/2 \rceil$ and $O(e^{d-k})$ otherwise.
In particular, if $d$ is constant, then $f_k(P^*(n,d))$ is of the same order as $f_k(c^*(n,d))$. The high complexity of $P^*(n,d)$ shows us that although $LI(2)$ has a much simpler structure than general linear programs, it is still a powerful and complex tool. We also showed that the dual cyclic polytope is not realizable in $LI(2)$. However in the upper bound we showed, the asymptotic complexity remains the same. It would be interesting to get a deeper understanding of $LI(2)$ and how it is different from general linear programs. 
The main open question that remains is how large the complexity of $f(P)$ can be for a polytope $P$ in $LI(2)$. Is it possible to have higher complexity than the complexity of $P^*(n,d)$? If yes, what is the maximum complexity that can be achieved? Is it asymptotically the same as the complexity of the a dual cyclic polytope? This is an interesting direction for future research. 
\bibliographystyle{abbrv}
\bibliography{redundancyFVC}
\end{document}